\newcommand{\dft}[1]{\emph{\textbf{#1}}}
\newcommand{\set}[1]{\left\{#1\right\}}
\newcommand{\sucht}{\,\middle |\,}
\DeclareMathOperator{\disj}{disj}
\newcommand{\calA}{\mathcal{A}}
\definecolor{KleinBlue}{RGB}{0, 47, 167}
\colorlet{BaseBlue}{KleinBlue}
\colorlet{AnalogousGreen}{BaseBlue>wheel,5,6}
\colorlet{AnalogousViolet}{BaseBlue>wheel,1,6}
\colorlet{TriadRed}{BaseBlue>wheel,4,12}
\colorlet{TriadGreen}{BaseBlue>wheel,8,12}
\colorlet{TetradMagenta}{BaseBlue>wheel,1,4}
\colorlet{TetradGreen}{BaseBlue>wheel,3,4}
\colorlet{ComplementaryBrown}{BaseBlue>wheel,1,2}
\definecolor{rb-blue}{rgb}{\basebluergb}
\colorlet{rb-blue-light}{rb-blue!10!white}
\colorlet{rb-blue-pastel}{rb-blue!50!white}
\definecolor{rb-red}{rgb}{\triadredrgb}
\colorlet{rb-red-light}{rb-red!10!white}
\colorlet{rb-red-pastel}{rb-red!50!white}
\colorlet{rb-red-bright}{rb-red!50!red}
\definecolor{rb-green}{rgb}{\triadgreenrgb}
\colorlet{rb-green-light}{rb-green!10!white}
\colorlet{rb-green-pastel}{rb-green!50!white}
\definecolor{rb-brown}{rgb}{\complementarybrownrgb}
\colorlet{rb-brown-pastel}{rb-brown!50!white}
\colorlet{rb-brown-light}{rb-brown!15!white}
\colorlet{rb-brown-dark}{rb-brown!20!black}
\definecolor{rb-magenta}{rgb}{\tetradmagentarbg}
\colorlet{rb-magenta-light}{rb-magenta!10!white}
\colorlet{rb-magenta-pastel}{rb-magenta!50!white}
\definecolor{rb-violet}{rgb}{\analogousvioletrgb}
\colorlet{rb-violet-light}{rb-violet!10!white}
\colorlet{rb-violet-pastel}{rb-violet!50!white}
\definecolor{rb-mint}{rgb}{\analogousgreenrgb}
\colorlet{rb-mint-light}{rb-mint!10!white}
\colorlet{rb-mint-pastel}{rb-mint!50!white}
\newtheorem{thm}{Theorem}
\newtheorem{prop}{Proposition}[section]
\newtheorem{lem}[prop]{Lemma}
\newtheorem{cor}[prop]{Corollary}
\newtheorem{eg}[prop]{Example}
\newtheorem{rem}[prop]{Remark}
\title{A Quadratic Lower Bound for Stable Roommates Solvability}
\author{Will Rosenbaum\\
\texttt{W.Rosenbaum@liverpool.ac.uk}\\
Department of Computer Science\\
University of Liverpool
}
\date{\today}
\begin{document}

\maketitle

\begin{abstract}
    In their seminal work on the Stable Marriage Problem (SM), Gale and Shapley introduced a generalization of SM referred to as the Stable Roommates Problem (SR). An instance of SR consists of a set of $2n$ agents, and each agent has preferences in the form of a ranked list of all other agents. The goal is to find a one-to-one matching between the agents that is stable in the sense that no pair of agents have a mutual incentive to deviate from the matching. Unlike the (bipartite) stable marriage problem, in SR, stable matchings need not exist. Irving devised an algorithm that finds a stable matching or reports that none exists in $O(n^2)$ time. In their influential 1989 text, Gusfield and Irving posed the question of whether $\Omega(n^2)$ time is required for SR solvability---the task of deciding if an SR instance admits a stable matching.

    In this paper we provide an affirmative answer to Gusfield and Irving's question. We show that any (randomized) algorithm that decides SR solvability requires $\Omega(n^2)$ adaptive Boolean queries to the agents' preferences (in expectation). Our argument follows from a reduction from the communication complexity of the set disjointness function. The query lower bound implies quadratic time lower bounds for Turing machines, and memory access lower bounds for random access machines. Thus, we establish that Irving's algorithm is optimal (up to a logarithmic factor) in a very strong sense.
\end{abstract}

\section{Introduction}

Since its formalization by Gale and Shapley in 1962~\cite{GS62}, the Stable Marriage Problem (SM) and its variants has become a central topic in economics, computer science, and mathematics. The rich theoretical landscape of stable allocation problems and their diverse applications gained further widespread recognition with the 2012 Nobel Prize in Economics, which was awarded to Roth and Shapley for their contributions to this field. In Gale and Shapley's original formulation of the problem, an SM instance consists of two disjoint sets of agents, traditionally referred to as men and women. Each agent has preferences in the form of a ranked list of agents of the opposite gender. The goal is to find a marriage (i.e., a one-to-one correspondence) between men and women that is stable in the sense that no man-woman pair have a mutual incentive to deviate from the marriage. The seminal result of Gale and Shapley proves that a stable marriage always exists and describes an efficient algorithm for finding one.

In~\cite{GS62}, Gale and Shapley also describe a generalization of SM in which there is no bipartition of the agent set into men and women. In this variant, each agent ranks all other agents---not just those of the opposite gender. Again, the goal is to find a stable matching: a one-to-one correspondence between agents such that no pair has a mutual incentive to deviate from their assigned partners (see Section~\ref{sec:sr} for formal definitions). This generalization of SM is referred to as the Stable Roommates Problem (SR).

Since its introduction by Gale and Shapley, SR has been the subject of a large volume of theoretical work and SR has found many applications. As the problem's name suggests, SR naturally models situations in which ``peers'' should be matched with one another, such as identifying roommates in housing searches~\cite{Perach2008-stable}. Variants of SR can be used to model pair-wise kidney exchanges for organ donation~\cite{Roth2005-pairwise}. The general study of this problem has led to the development of algorithms to facilitate kidney exchange markets for organ donation in several countries~\cite{PairedDonation, LivingKidneyDonation, deKlerka2005-dutch}. SR algorithms have also been employed for peer-to-peer file sharing networks~\cite{Lebedev2007-p2p, Mathieu2010-acyclic}, and forming pairs of players in chess tournaments~\cite{Kujansuu1999-stable}. We refer the readers to~\cite{Manlove2013-algorithmics} for a thorough overview of SR and applications.

Unlike the bipartite SM---where stable marriages are guaranteed to exist for all preferences---Gale and Shapley observed that SR instances need not admit stable matchings (see Section~\ref{sec:unsolvable} for an explicit example). The question of whether or not a given instance admits a stable matching is known as the SR solvability problem. Gale and Shapley~\cite{GS62} and subsequently Knuth~\cite{Knuth76} asked whether there exists an efficient algorithm for SR solvability (and finding a stable matching if one exists), where Knuth suggested the problem might be NP-complete. In 1985, Irving~\cite{Irving1985-stable} devised an efficient algorithm for SR solvability. Irving's algorithm finds a stable matching or reports that none exists in $O(n^2)$ time on instances with $2n$ agents.\footnote{Observe that since the input consists of $2n$ lists each of length $2n - 1$, $O(n^2)$ is \emph{linear} in the input size.} In 1990, Ng and Hirschberg~\cite{HN90} showed that $\Omega(n^2)$ time is necessary to find a stable marriage (in the bipartite stable marriage problem). Their result implies the same lower bound holds for \emph{finding} a stable matching for SR, but the lower bound does not apply to the decision problem of determining whether or not an SR instance admits a stable matching. In their influential 1989 text on stable matching problems, Gusfield and Irving~\cite{GI89} listed finding a lower bound (or $o(n^2)$ time algorithm) for deciding SR solvability as one of 12 open questions for future research. This question was again listed as an open problem in Manlove's comprehensive 2012 text on algorithmic aspects of stable matchings~\cite{Manlove2013-algorithmics}. To our knowledge, the question has remained open since.

\subsection{Our Contributions}

 Our main result is to prove an $\Omega(n^2)$ lower bound for any (randomized or deterministic) algorithm deciding SR solvability. While Gusfield and Irving's question is phrased in terms of running time lower bounds for a particular representation of the agents' preferences, we give a more general result that bounds the number of Boolean queries necessary to decide SR solvability.

\begin{thm}\label{thm:informal}[Informal, c.f.\ Theorem~\ref{thm:main-lb}]
    Any algorithm that decides SR solvability requires $\Omega(n^2)$ Boolean queries to the agents' preferences for instances with $2n$ agents. This lower bound applies to randomized protocols (in expectation) and allows for arbitrary Boolean queries made to individual agents' preferences as well as queries made to predetermined batches of agents (i.e., queries that involve more than one agent's preferences).
\end{thm}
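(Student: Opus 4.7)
The plan is to prove the $\Omega(n^2)$ query lower bound via a reduction from the two-party communication complexity of set disjointness. Recall that in $\disj_N$, Alice holds $X \subseteq [N]$, Bob holds $Y \subseteq [N]$, and the two parties wish to decide whether $X \cap Y = \emptyset$; the classical theorem of Kalyanasundaram–Schnitger (and independently Razborov) shows that the randomized communication complexity of $\disj_N$ is $\Omega(N)$. The goal is to show that any algorithm deciding SR solvability with $q$ Boolean queries can be converted, for $N = \Theta(n^2)$, into a protocol for $\disj_N$ exchanging $O(q)$ bits, whence $q = \Omega(n^2)$.

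The heart of the argument is the reduction itself: given $X,Y \subseteq [N]$ with $N = \Theta(n^2)$, I would construct an SR instance $\mathcal{I}(X,Y)$ on $2n$ agents so that $\mathcal{I}(X,Y)$ is solvable if and only if $X \cap Y = \emptyset$. I would partition the agents into an Alice-half $A = \{a_1, \ldots, a_n\}$ and a Bob-half $B = \{b_1, \ldots, b_n\}$, and arrange that the preference list of every $a_i$ depends only on $X$ while the preference list of every $b_j$ depends only on $Y$. Each coordinate $k \in [N]$ is identified with a pair in $[n] \times [n]$, and the encoding plants a small frustration sub-gadget—analogous to the canonical four-agent unsolvable SR instance exhibited in Section~\ref{sec:unsolvable}—that is activated exactly when $k \in X \cap Y$. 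A single activated coordinate then obstructs every candidate matching, while a disjoint pair $(X,Y)$ admits an explicit stable matching (e.g., some pre-specified perfect matching across the partition). Since an SR preference profile on $2n$ agents carries $\Theta(n^2 \log n)$ bits of information, there is ample room to embed the $\Theta(n^2)$-bit disjointness input.

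The next step is to simulate any SR-solvability algorithm $\calA$ by a communication protocol. For each Boolean query $\calA$ issues: if the query touches only agents in $A$, Alice evaluates it locally and sends the 1-bit answer to Bob, and symmetrically for $B$; for a batched query spanning both halves, the gadget is designed so that the answer decomposes into constantly many single-side computations, which each party can evaluate locally and exchange in $O(1)$ bits. Summed over $q$ queries, the total communication is $O(q)$, and combining with the $\Omega(N) = \Omega(n^2)$ bound for $\disj_N$ yields the desired $q = \Omega(n^2)$. Extension from deterministic to randomized query complexity follows because the communication lower bound for $\disj_N$ already holds in the randomized model.

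The principal obstacle is the gadget construction. Preferences in SR are total rankings over $2n-1$ agents, so a single misplaced entry can create unintended blocking pairs or unintended stable matchings. The delicate task is to simultaneously ensure that (i) every $k \in X \cap Y$ induces an obstruction that rules out \emph{every} candidate matching—most naturally verified through Irving's rotation framework or an explicit enumeration of matchings restricted to the active gadget—while (ii) the absence of common elements produces an explicit, easily certifiable stable matching, and (iii) the partition into $A$ and $B$ keeps each coordinate's influence sufficiently local that batch queries can be simulated with $O(1)$ communication. The tension between richness of the gadget (needed for (i)) and locality of the encoding (needed for (iii)) is where I expect the bulk of the technical work to lie.
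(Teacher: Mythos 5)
Your high-level strategy---embedding set disjointness into SR solvability and simulating each Boolean query with $O(1)$ bits of communication---is exactly the paper's, and your accounting of parameters ($N = \Theta(n^2)$, randomized lower bound inherited from $\disj$) is correct. But the proposal stops precisely where the content lies, and the two places where you do commit to specifics both have problems. First, you propose splitting the agents into only two halves, $A$ controlled by Alice and $B$ controlled by Bob. The paper instead uses four groups $A, B, C, D$: Alice's input determines only the $A$-preferences, Bob's only the $B$-preferences, and the preferences of $C$ and $D$ are \emph{fixed and publicly known}. These auxiliary public agents are not a convenience but the mechanism that makes the gadget work: when $x_{k\ell} = y_{k\ell} = 1$, the unsolvable sub-instance of Example~\ref{eg:unsolvable} is realized as the directed triangle $a_k \to b_\ell \to c_k \to a_k$ with $d_\ell$ as the leftover fourth agent, and the (non)solvability in both cases is verified cleanly by running Phase~1 of Irving's algorithm (Lemma~\ref{lem:phase-1}), not by enumerating matchings or invoking rotations. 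With only two private halves you must build the frustration triangle entirely out of agents whose lists depend on $X$ or on $Y$, and you give no construction showing this can be done while keeping each coordinate's influence local; you yourself flag this as where "the bulk of the technical work" lies, which means the proof is not actually there.

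Second, your treatment of batch queries spanning both halves is a genuine gap, not a technicality. You assert that "the gadget is designed so that the answer decomposes into constantly many single-side computations," but no gadget can guarantee this for \emph{arbitrary} Boolean queries on a mixed batch: such a query could itself compute a disjointness-like function of the two sides' preferences (or, for a large enough batch, "is this sub-instance solvable?"), which by the very lower bound you are invoking cannot be answered with $O(1)$ communication. The paper resolves this differently: batches are \emph{predetermined} and bounded in size, and the roles $A, B, C, D$ are assigned so that no allowed query ever depends on the private preferences of both Alice's and Bob's agents---every query is answerable by one party alone. Without either the public auxiliary agents or this restriction on which queries are simulable, the reduction does not go through as stated.
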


One advantage of phrasing our lower bound in terms of (arbitrary) Boolean queries is that the lower bound extends to many models of computation simultaneously and it is agnostic to the representation of the preferences.

\begin{cor}\label{cor:lb}
    The following lower bounds hold for deciding SR solvability of instances with $2n$ agents:
    \begin{enumerate}
        \item Any (multi-tape, probabilistic) Turing machine that decides SR solvability requires $\Omega(n^2)$ time (in expectation).
        \item Any (randomized) random access machine (RAM) with word size $O(\log n)$ bits that decides SR solvability requires $\Omega(n^2 / \log n)$ memory accesses (in expectation).
    \end{enumerate}
    The lower bounds of~1 and~2 hold even if the agents' preferences are preprocessed arbitrarily in batches of size up to $n/2$ (where separate batches are preprocessed independently of each other).
\end{cor}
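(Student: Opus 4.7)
The plan is to reduce each computational model to the Boolean query model of Theorem~\ref{thm:main-lb}, using the fact that---regardless of how the agents' preferences are encoded---every bit that the machine reads is a deterministic function of the preferences (together with the random bits), and hence corresponds to a Boolean query on the preferences.

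For part~1, suppose a multi-tape probabilistic Turing machine decides SR solvability in expected time $T(n)$. Each step inspects at most one input-tape cell, and the content of cell $i$ is a fixed Boolean function $q_i$ of the preferences determined by the encoding. I would construct a randomized query protocol that mirrors the Turing machine step for step, sharing its random tape, and issues the query $q_i$ whenever the simulated machine reads input-tape cell $i$. The protocol decides SR solvability with the same success probability and uses at most $E[T(n)]$ expected Boolean queries, so Theorem~\ref{thm:main-lb} forces $T(n) = \Omega(n^2)$.

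For part~2, suppose a randomized RAM with word size $w = O(\log n)$ decides SR solvability in $M(n)$ expected memory accesses. The content of each memory cell is, after the sequence of writes leading to the access, a deterministic function of the preferences and the random bits, so each access returns at most $w$ bits, each a Boolean function of the preferences. Simulating the RAM by issuing these $w$ Boolean queries per access yields a randomized query protocol using $O(w \cdot M(n)) = O(M(n) \log n)$ expected queries, so Theorem~\ref{thm:main-lb} gives $M(n) = \Omega(n^2 / \log n)$.

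Finally, for the batched-preprocessing clause: if the preferences are preprocessed in independent batches of at most $n/2$ agents each, then every preprocessed bit later read by the TM or RAM depends only on the preferences of a single batch---equivalently, on at most $n/2$ agents. Thus every Boolean query produced by the simulations above is a batch query of the kind that Theorem~\ref{thm:main-lb} already permits, so the same reductions apply. The only real subtlety, which I do not expect to be a serious obstacle, is verifying this factorization property carefully: that every bit accessed during the main computation can indeed be traced back to at most one batch.
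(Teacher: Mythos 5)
Your proposal is correct and follows essentially the same route as the paper's (sketched) proof: each input-tape read or word access is simulated by one or $O(\log n)$ Boolean queries on the preferences, and each preprocessed bit is traced back to a single batch so that the batch-query version of Theorem~\ref{thm:main-lb} applies. The ``subtlety'' you flag at the end is exactly the point the paper also asserts without further elaboration, namely that independent batch preprocessing makes every encoded bit a function of one batch's preferences only.
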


Another interpretation of our lower bound is for mechanisms that decide SR solvability by eliciting agents' \emph{implicit} preferences. In practice, agents' preferences may not be known explicitly, even to the agents themselves.\footnote{For example, consider the eponymous application of determining suitable roommates for housing. The question of determining if $a$ prefers potential roommate $b$ to $c$ may rely on knowing a significant amount of information about $b$ and $c$ that is not initially known to $a$. In cases of assigning university housing (where potential roommates may not know each other beforehand) responses to questionnaires are often used as a proxy for explicit preferences, with the hope that responses to the questions elicit sufficient information about the ``true'' preferences to determine roommate compatibility.} Thus, it is desirable to design a mechanism (such as questionnaires) that elicits sufficient information about the agents' preferences to determine SR solvability. Theorem~\ref{thm:informal} implies that any such mechanism requires $\Omega(n^2)$ Boolean queries. Thus, determining SR solvability is essentially as hard as learning the agents' preferences.

In order to prove Theorem~\ref{thm:informal}, we employ a reduction from the two-party communication complexity of the set disjointness function, $\disj$. This technique was previously applied by Gonczarowski et al.~\cite{Gonczarowski2019-stable} to obtain lower bounds for the (bipartite) SM. Given the framework of query lower bounds via communication complexity~\cite{Blais2012-property,Eden2018-lower}, our construction and argument are quite simple. The basic idea is to define a family of SR instances that correspond to inputs to the disjointness function. That is, given an input $(x, y)$ to $\disj$, we define an SR instance $R(x, y)$ such that $R(x, y)$ admits a stable matching if and only if $\disj(x, y) = 1$. We then argue that the correspondence $(x, y) \mapsto R(x, y)$ has the property that any algorithm that decides SR solvability using $q$ queries implies the existence of a communication protocol for $\disj$ using $q$ bits of communication. Our main result follows from well-known communication complexity lower bounds for $\disj$~\cite{KS92,Razborov92}. We describe this technique more thoroughly in Section~\ref{sec:communication} before defining the reduction formally in Section~\ref{sec:lb}.

\begin{rem}
    Our $\Omega(n^2)$ query lower bound is only tight to Irving's algorithm~\cite{Irving1985-stable} up to a logarithmic factor. This is because an SR instance of size $2n$ contains $2n$ lists of $2n - 1$ agents. If each agent's identity is encoded with $\Theta(\log n)$ bits, the total size of preference lists is $\Theta(n^2 \log n)$ bits. Irving's algorithm allows for access to a single entry on a preference list as a unit cost operation, whereas this operation would correspond to $\Theta(\log n)$ Boolean queries in our model. Thus, in our model, Irving's algorithm uses $\Theta(n^2 \log n)$ Boolean queries. We leave it as an open question whether the lower bound can be improved to $\Omega(n^2 \log n)$ Boolean queries (i.e., a lower bound that is truly linear in the instance size) or if the logarithmic factor in the running time can be improved.
\end{rem}

\subsection{Related Work}

SR was introduced by Gale and Shapley~\cite{GS62} who showed that SR instances need not admit stable matchings. Knuth~\cite{Knuth76} explicitly posed the question of whether there is an efficient algorithm to find a stable matching or report that none exists. Irving~\cite{Irving1985-stable} devised an algorithm that finds a stable matching or correctly determines that none exists in $O(n^2)$ time (assuming each preference list entry can be accessed in $O(1)$ time). Gusfield and Irving's influential book~\cite{GI89} gives a comprehensive overview of early algorithmic work on SR, including structural aspects of the set of stable matchings for an SR instance. Additionally,~\cite{GI89} listed two questions related to efficiently deciding SR solvability. The first question asked if it is possible to generate a ``succinct certificate'' for non-solvability of an SR instance. This question was positively answered by Tan~\cite{Tan1991-necessary} who demonstrated that a ``stable partition'' (of size $O(n)$) can be computed in $O(n^2)$ time. Given the partition, solvability of an SR instance can be verified in $O(n)$ time. Gusfield and Irving's second question asked if it is possible to decide SR solvability in $o(n^2)$ time directly. Our lower bound gives a negative answer to this question. Given Tan's result, our lower bound also implies that finding a stable partition requires $\Omega(n^2)$ time. Several previous works~\cite{HN90,CL10,Segal03,Gonczarowski2019-stable} proved $\Omega(n^2)$ lower bounds for \emph{finding} stable marriages in SM (and \emph{a fortiori} for finding stable matchings in SR), but none of these results imply lower bounds for the decision problem of SR solvability. We refer the reader to the text of Manlove~\cite{Manlove2013-algorithmics} for an account of more recent developments related to SR.

Our main lower bound result follows from a reduction from the two-party communication complexity of the disjointness function (see Section~\ref{sec:communication} for a formal definition). The two-party communication complexity model was introduced by Yao in~\cite{Yao79}. The (randomized) two-party communication complexity of the disjointness function was first characterized by Kalyanasundaram and Schintger~\cite{KS92}, and subsequently a simpler argument was found by Razborov~\cite{Razborov92}. 

Most closely related to the present paper is the work of Gonczarowski et al.~\cite{Gonczarowski2019-stable}. In~\cite{Gonczarowski2019-stable}, the authors apply the communication complexity of set disjointness to obtain lower bounds for finding and verifying (bipartite) stable marriages. Indeed, our techniques in this paper closely follow that of~\cite{Gonczarowski2019-stable}. Reductions from communication complexity to obtain (sublinear time) lower bounds were employed by~\cite{Blais2012-property} for property testing. The technique was further codified in the work of Eden and Rosenbaum~\cite{Eden2018-lower} who proved a variety of lower bounds for sublinear time algorithms. Our description of our lower bound in terms of an ``embedding'' of set disjointness follows the general approach and vocabulary of~\cite{Eden2018-lower}.
\section{Background and Definitions}

\subsection{The Stable Roommates Problem}\label{sec:sr}

Formally, an instance of the Stable Roommates problem (hereafter, SR), consists of a set $S$ of $2n$ \dft{agents} together with \dft{preferences} for each agent in the form a ranked list of all other agents. We say that $a$ \dft{prefers} agent $b$ to $c$ if $b$ appears before $c$ on $a$'s preference list. A matching $M = \set{\set{a_1, b_1}, \set{a_2, b_2}, \ldots, \set{a_n, b_n}}$ is a partition of $S$ into subsets of size $2$. Given an SR instance and a matching $M$, we say that a pair $\set{a_i, b_j} \notin M$ is a \dft{blocking pair} if $a_i$ prefers $b_j$ to $b_i$ and $b_j$ prefers $a_i$ to $a_j$. A matching that does not induce any blocking pairs is said to be \dft{stable}.

An SR instance is \dft{solvable} if it admits a stable matching. The \dft{SR Solvability} problem is the problem of deciding if a given SR instance is solvable.

An efficient ($O(n^2)$ time) algorithm for solving SR Solvability was proposed by Irving in~\cite{Irving1985-stable}. We will not require a description of the full algorithm, but our analysis will rely on an initial processing of the preference lists from Irving's algorithm. Our terminology and exposition of Irving's algorithm follow Gusfield and Irving's text~\cite[Chapter~4]{GI89}. 

The algorithm maintains a \dft{preference table} that initially lists all of the agents' preferences. As the algorithm proceeds, pairs of agents that cannot appear in any stable matching are removed. When a pair $\set{a, b}$ is removed from the table, $a$ is removed from $b$'s preference list and $b$ is removed from $a$'s preference list, while the relative order of other agents in the table are unchanged. The initial phase of Irving's algorithm, which we refer to as the Phase~1 algorithm (Algorithm~\ref{alg:phase-1}) proceeds as follows. Each agent is initially ``free.'' Free agents are selected in an arbitrary order to propose to the next remaining agent on their preference list. When $a$ proposes to $b$, $a$ becomes \dft{semiengaged} to $b$ and $a$ is no longer free. Upon receiving a proposal from $a$, $b$ rejects any agent currently semiengaged to $b$, and all pairs of agents $\set{a', b}$ such that $b$ prefers $a$ to $a'$ are removed from the preference table. This process continues until either all agents are semiengaged or all free agents have empty preference lists.

\begin{algorithm}
    \caption{Phase 1 Algorithm~\cite{Irving1985-stable,GI89} \label{alg:phase-1}}
    \begin{algorithmic}[1]
        \Procedure{PhaseOne}{}
        \State Assign each agent to be free
        \While{Some free agent $x$ has a nonempty list}
        \State $y \gets$ first agent on $x$'x preference list \Comment{$x$ proposes to $y$}
        \If{some $z$ is semiengaged to $y$}
        \State assign $z$ to be free \Comment{$y$ rejects $z$}
        \EndIf
        \State assign $x$ to be semiengaged to $y$
        \ForAll{successors $x'$ of $x$ on $y$'s list}
        \State remove $\set{x', y}$ from the preference table \label{ln:remove}
        \EndFor
        \EndWhile
        \EndProcedure
    \end{algorithmic}
\end{algorithm}

Irving showed that when the Phase~1 algorithm terminates, the resulting preference table has the following properties.

\begin{lem}[{Irving~\cite{Irving1985-stable}, c.f.~\cite[Section 4.2.2]{GI89}}]\label{lem:phase-1}
    Consider the preference table resulting from an execution of Algorithm~\ref{alg:phase-1} on an SR instance. Then:
    \begin{enumerate}
        \item The resulting preference table is independent of the order in which free agents propose (\cite[Lemma 4.2.1]{GI89}).
        \item Any pair $\set{a, b}$ that is not contained in the table is not contained in any stable matching. In particular, if any preference list in the table is empty, then the instance is not solvable (\cite[Lemma~4.2.3]{GI89})
        \item If $M$ is a matching such that all pairs in $M$ appear in the preference table, then no pair $\set{a, b}$ not appearing the preference table can block $M$. In particular, if the preference table consists only of a matching $M$, $M$ is the unique stable matching for the instance (c.f., \cite[Lemmas~4.2.1 and~4.2.4]{GI89})

    \end{enumerate}
\end{lem}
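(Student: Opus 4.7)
My plan is to prove the three claims in the order 2, 3, 1: part~2 is a clean induction on removals; part~3 follows from two monotonicity properties of Phase~1; and part~1 is the technical core, which I would handle by a confluence argument.

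For \textbf{part~2}, I would fix an execution of Algorithm~\ref{alg:phase-1} and induct on the number of pair-removals performed at line~\ref{ln:remove}. Assume that no previously removed pair belongs to any stable matching, and consider the next removal $\set{x',y}$, triggered by $x$'s proposal to $y$ with $y$ preferring $x$ to $x'$. Because $x$ proposed to $y$, agent $y$ is at the top of $x$'s current list, so every agent $x$ strictly prefers to $y$ has already been removed; by the induction hypothesis, in any stable $M$ the partner $M(x)$ is either $y$ or someone $x$ likes no better than $y$. If some stable $M$ contained $\set{x',y}$, then $M(x) \neq y$, so $x$ strictly prefers $y$ to $M(x)$ while $y$ strictly prefers $x$ to $M(y) = x'$; hence $\set{x,y}$ blocks $M$, contradicting stability.

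For \textbf{part~3}, suppose for contradiction that $\set{a,b}$ lies outside the final table but blocks a matching $M$ all of whose pairs appear in the table. The removal of $\set{a,b}$ happened during some proposal $x \to b$ with $b$ preferring $x$ to $a$. Since $M$'s pairs are all present, no list in the final table is empty, so Phase~1 terminates with every agent semiengaged; let $w$ denote $b$'s final semiengaged partner. Two properties of Phase~1 then yield a contradiction: (i)~$b$'s semiengagement partner strictly improves over time, since each change of semiengagement is in favor of a proposer that $b$ strictly prefers to the previous partner, so $b$ weakly prefers $w$ to $x$; and (ii)~when $w$'s proposal established the final semiengagement, all successors of $w$ on $b$'s list were removed and none were added afterwards, so $w$ sits at the bottom of $b$'s final list, which forces $b$ to weakly prefer $M(b)$ (still in the list) to $w$. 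Chaining, $b$ weakly prefers $M(b)$ to $x$ and strictly prefers $x$ to $a$, so $b$ strictly prefers $M(b)$ to $a$, contradicting that $\set{a,b}$ blocks $M$.

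For \textbf{part~1}, I would use a confluence argument: for any two executions $E_1$ and $E_2$, induct on the time of removal in $E_1$ to show every pair removed in $E_1$ is also removed in $E_2$; symmetry then gives equality of the final tables. When $\set{x',y}$ is removed in $E_1$ at time $t$ because $x$ proposes to $y$, the induction places all of $E_1$'s prior removals in $E_2$, so after those same removals in $E_2$ every agent $x$ prefers to $y$ has been removed from $x$'s list. Hence, when $x$ next becomes free in $E_2$, its proposal is to $y$ (triggering the same removal) unless some agent $y$ prefers to $x$ has already proposed to $y$ in $E_2$, in which case $\set{x',y}$ is removed even earlier by the same rule. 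The main obstacle is handling interleavings carefully: proposals unique to $E_2$'s ordering must not be able to permanently prevent the removal. The clean way to formalize this is a Newman-style local-diamond lemma showing that any two ``swappable'' proposals can be reordered without altering the final table; iterating such local swaps turns any complete execution into any other, establishing confluence.
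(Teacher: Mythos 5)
This lemma is background material that the paper itself does not prove---it is imported wholesale from Irving and from Gusfield--Irving~\cite{Irving1985-stable,GI89}---so there is no in-paper proof to compare against. Judged on its own terms, your arguments for parts~2 and~3 are correct and complete, and they are essentially the standard ones: part~2 by induction on deletions using the fact that a proposer has already lost everyone it prefers to its target, and part~3 by the two monotonicity facts that a recipient's held proposal only improves and that the final proposer to $b$ ends up at the bottom of $b$'s reduced list. (Two small points worth making explicit: the reduction to ``the deletion of $\set{a,b}$ was triggered by a proposal to $b$'' is a genuine WLOG, justified by the symmetry of the blocking condition; and the existence of a final semiengagement partner $w$ for $b$ follows because $b$ received at least one proposal and a held proposal is only ever replaced, never dropped.)

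Part~1, however, is a strategy rather than a proof, and the gap you flag is real. In your induction, the dichotomy ``$x$'s next proposal in $E_2$ is to $y$, unless someone $y$ prefers to $x$ has already proposed to $y$'' is incomplete: the pair $\set{x,y}$ can also be deleted in $E_2$ from $x$'s side, by a proposal made \emph{to} $x$ from an agent $w$ that $x$ prefers to $y$ and whose pair $\set{x,w}$ has not yet been deleted in $E_2$ (even though it was deleted before time $t$ in $E_1$); in that event no proposal to $y$ occurs and the deletion of $\set{x',y}$ does not follow from your stated cases. One must also argue that $x$ actually becomes free again after the relevant deletions, which requires observing that a pair $\set{x,y''}$ with $x$ semiengaged to $y''$ can only be deleted by an event that simultaneously frees $x$. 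Appealing to a Newman-style diamond lemma does not discharge this: termination is easy, but the local-confluence hypothesis is exactly the interleaving analysis you are deferring, and the case of two free agents whose current list heads coincide shows the divergence is only joinable after several steps, not a one-step diamond. Either route (the direct induction with the missing cases handled, or Newman's lemma with local confluence actually verified) can be completed, but as written part~1 asserts the crux rather than proving it.
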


In our lower bound construction we will use only Lemma~\ref{lem:phase-1} to argue the (non)solvability of the required SR instances.

\subsubsection{An Unsolvable SR Instance}\label{sec:unsolvable}

In order to motivate the construction used to achieve our main result, it is helpful to understand the following non-solvable instance of SR.

\begin{eg}\label{eg:unsolvable}
    Consider agent set $S = \set{a, b, c, d}$ with the following preferences
    \begin{center}
        \begin{tabular}{c|ccc}
            agent & \multicolumn{3}{c}{preferences}\\
            \hline
            $a$ & $b$ & $c$ & $d$\\
            $b$ & $c$ & $a$ & $d$\\
            $c$ & $a$ & $b$ & $d$\\
            $d$ & \multicolumn{3}{c}{arbitrary}
        \end{tabular}
    \end{center}
    To see why this instance is not solvable, observe that agents $a$, $b$, and $c$ form a ``directed triangle'' where they all rank one another in the top two, but the most preferred agents form a cycle $a \to b \to c \to a$. Thus in any matching $M$, $d$'s partner in $M$ will form a blocking pair with one of the other two matched agents. For example, if $\set{a, d} \in M$, then $\set{a, c}$ forms a blocking pair.
\end{eg}

The idea of our lower bound construction is to define a family of SR instances in which there are many possible sub-instances homomorphic to Example~\ref{eg:unsolvable}. For this family, determining SR solvability is equivalent to finding if a given instance contains such a sub-instance. We argue that determining the existence of such a sub-instance requires $\Omega(n^2)$ accesses to the agent's preferences using a reduction from communication complexity.

\subsection{Communication Complexity and Embeddings}
\label{sec:communication}

Our proof of Theorem~\ref{thm:informal} employs a reduction from two-party communication complexity~\cite{Yao79}. In this computational model, the goal is to compute a value $f(x, y)$ where $f$ is some Boolean function, and $x, y \in \set{0, 1}^N$ are inputs. The inputs are distributed between two parties, traditionally referred to as Alice and Bob. Alice knows only $x$, while Bob knows only $y$. The goal is for Alice and Bob to both learn the value of $f(x, y)$ while exchanging the fewest possible number of bits. The (deterministic) \dft{communication complexity} of $f$ is defined to be the minimum number of bits needed for the worst-case input for any communication protocol that computes $f$. Randomized communication complexity is defined analogously using randomized communication protocols that are allowed to err with some small probability. We refer the reader to the text of Kushilevitz and Nisan~\cite{KN97} for a formal description of the communication models and a thorough exposition of fundamental results.

In this paper, we focus on the communication complexity of the \dft{disjointness function} defined by
\begin{equation}
    \disj(x, y) = 
    \begin{cases}
        1 &\text{ if } \sum_{i = 1}^N x_i y_i = 0\\
        0 &\text{ otherwise.}
    \end{cases}
\end{equation}

The (randomized) communication complexity of the disjointness function was first characterized by Kalyanasundaram and Schintger~\cite{KS92}, and a simpler argument was found by Razborov~\cite{Razborov92} soon after. 

\begin{thm}[{\cite{KS92,Razborov92}}]\label{thm:disjointness}
    Any (deterministic or randomized) communication protocol for $\disj$ requires $\Omega(N)$ bits of communication. This lower bound holds if inputs are restricted to be either disjoint ($\sum_i x_i y_i = 0$) or uniquely intersecting ($\sum_i x_i y_i = 1$).
\end{thm}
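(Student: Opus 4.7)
The plan is to reduce randomized communication complexity to distributional complexity via Yao's minimax principle, and then prove a distributional lower bound against a carefully chosen hard input distribution. Concretely, I would exhibit a distribution $\mu$ on $\{0,1\}^N \times \{0,1\}^N$ supported only on disjoint and uniquely intersecting pairs, such that any deterministic protocol computing $\disj$ with constant error probability under $\mu$ requires $\Omega(N)$ bits. A natural choice of $\mu$ places constant mass on disjoint pairs $(x,y)$ where $x$ and $y$ are characteristic vectors of uniformly random subsets of $[N]$ of size roughly $N/4$, and the remaining mass on uniquely intersecting pairs sampled from a closely matched product-like distribution.

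The structural workhorse is the standard rectangle-partition lemma: any deterministic $c$-bit communication protocol induces a partition of the input space into at most $2^c$ combinatorial rectangles of the form $A \times B$, and the protocol's output is constant on each such rectangle. To lower bound $c$, it therefore suffices to show that no rectangle can simultaneously carry significant $\mu$-mass on disjoint inputs and negligible $\mu$-mass on uniquely intersecting inputs. If this property holds, then a protocol of error $\epsilon$ smaller than some absolute constant must use a large number of rectangles.

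The central technical ingredient, and the main obstacle, is the \emph{corruption lemma} of Razborov. I would aim to prove that there exist constants $\alpha, \beta > 0$ such that for every combinatorial rectangle $R \subseteq \{0,1\}^N \times \{0,1\}^N$, either the disjoint component of $\mu$ assigns $R$ mass at most $2^{-\alpha N}$, or the uniquely intersecting component of $\mu$ assigns $R$ mass at least $\beta$ times the disjoint mass. Once this is in hand, a clean averaging argument finishes the proof: a correct protocol with error $\epsilon < \beta/3$ cannot label the disjoint-heavy rectangles with $1$ (they contain too many intersecting pairs), so almost all of the disjoint mass must fall on rectangles of disjoint mass $\leq 2^{-\alpha N}$, forcing $2^c \geq 2^{\alpha N}$.

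I would expect the corruption lemma itself to require the bulk of the work. The approach here is to bound, for each rectangle $R = A \times B$, a bilinear quantity counting intersecting pairs $(x,y) \in R$ in terms of the disjoint pairs in $R$ via concentration properties of uniformly random subsets of a fixed size. The key combinatorial estimate — comparing the probability that a random coordinate lies in both $x$ and $y$ to the probability that it avoids both, conditioned on $(x,y) \in A \times B$ — is where all the content lives; the reduction from randomized protocols and the rectangle bookkeeping around it are standard. The second assertion of the theorem, concerning the restricted promise version, comes for free because the hard distribution $\mu$ is supported entirely on disjoint and uniquely intersecting inputs.
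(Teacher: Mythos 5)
The paper does not prove this theorem at all: it is imported as a black box from Kalyanasundaram--Schnitger and Razborov, and the statement's bracketed citation is the entirety of its justification in the text. So there is no ``paper's proof'' to compare against; what you have written is a roadmap of Razborov's own argument (Yao's minimax principle, a hard distribution supported on the promise set, the rectangle partition induced by a deterministic protocol, and a corruption lemma showing every large rectangle labelled ``disjoint'' must contain a constant fraction of intersecting mass). That roadmap is the correct one, and your observation that the promise version comes for free because the hard distribution lives entirely on disjoint and uniquely intersecting inputs is exactly the point the paper needs.

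Two cautions, though. First, you have deferred essentially all of the content: the corruption lemma \emph{is} the theorem, and ``I would aim to prove'' it is not a proof. Second, and more substantively, your description of the hard distribution as placing mass on independent uniformly random subsets of size roughly $N/4$, with the intersecting part drawn from a ``closely matched product-like distribution,'' is a genuine danger sign. A product distribution provably cannot work here: Babai, Frankl, and Simon showed that $\disj$ has distributional complexity $O(\sqrt{N}\log N)$ under every product distribution, so any corruption lemma you try to prove for a product (or honestly ``product-like'') $\mu$ is false. Razborov's distribution is essential and essentially non-product: one first samples a hidden random partition of the universe into two halves plus a distinguished element $t$, then draws $x$ and $y$ as random subsets of their respective halves, including $t$ in both with probability $1/4$. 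The correlation through the hidden partition is precisely what makes every large rectangle corrupt. If you intend to complete this proof, you must commit to that distribution (or an equivalent one) before attempting the key estimate; as written, the plan points at a lemma that does not hold for the distribution you describe.
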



Our strategy in obtaining our main result is to define a reduction from $\disj$ to SR solvability. That is, we define a function that maps each input $(x, y)$ for the two party disjointness function to an SR instance $R(x,y)$. This mapping $(x, y) \mapsto R(x, y)$ is an \dft{embedding} of set disjointness in the sense defined by Eden and Rosenbaum~\cite{Eden2018-lower}. This means that the function has the following properties:
\begin{enumerate}
    \item $R(x, y)$ is solvable if and only if $\disj(x, y) = 1$.
    \item The response to any allowed Boolean query to $R(x, y)$ can be computed by Alice (who knows only $x$) and Bob (who knows only $y$) using $O(1)$ bits of communication.
\end{enumerate}

Under a generalization of these conditions, Eden and Rosenbaum~\cite{Eden2018-lower} showed that communication lower bounds imply query lower bounds. We emphasize that this technique yields adaptive query lower bounds: future queries are allowed to depend on the responses to previous queries. Specializing their result to our setting gives the following result.

\begin{thm}[{c.f.~\cite{Eden2018-lower}}]\label{thm:communication-embedding}
    Suppose there exists an embedding from $\disj$ instances of size $N$ to SR solvability instances of size $n$ satisfying the conditions above. Then any randomized or deterministic protocol for SR solvability for instances of size $n$ must use $\Omega(N)$ adaptive queries in expectation.
\end{thm}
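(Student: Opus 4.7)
The plan is to run the standard query-to-communication simulation: given a (randomized) query algorithm $A$ deciding SR solvability using $q$ queries in expectation on instances of size $n$, I will build a communication protocol for $\disj$ on inputs of size $N$ whose expected communication is $O(q)$. Combined with Theorem~\ref{thm:disjointness}, which says that any randomized protocol for $\disj$ requires $\Omega(N)$ bits, this forces $q = \Omega(N)$.

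First I would set up the simulation. Given inputs $x$ to Alice and $y$ to Bob, both parties conceptually consider the SR instance $R(x,y)$ defined by the embedding. Using shared public randomness (which is known to affect randomized communication complexity by at most an additive $O(\log N)$ term and does not change the $\Omega(N)$ bound), Alice and Bob jointly sample the random coins of $A$ and run $A$ step by step in a synchronized manner. Whenever $A$ issues a Boolean query to $R(x,y)$, the embedding's second property guarantees that the answer can be produced using $O(1)$ bits of exchanged communication; Alice and Bob perform that exchange, both learn the answer, and continue the simulation. When $A$ halts with an output, condition~(1) of the embedding guarantees that this output equals $\disj(x,y)$, so both parties output it.

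Next I would account for the communication. If $A$ makes $q(x,y)$ queries on input $R(x,y)$, the total communication is at most $c \cdot q(x,y) + O(1)$ for some constant $c$ coming from the $O(1)$-bit answer protocols. Taking expectations over all randomness (public coins and, if allowed, private coins of $A$), the expected communication is $O(\mathbb{E}[q]) + O(1)$. Since $A$ is a correct protocol for SR solvability with constant success probability, the resulting protocol computes $\disj$ with the same success probability. Theorem~\ref{thm:disjointness} then gives $\mathbb{E}[q] = \Omega(N)$, which is exactly the statement.

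The only substantive point to verify is that "adaptive" queries cause no trouble: because the simulation is run in lock-step with both parties knowing every query-answer pair so far (each answer was shared between them during its $O(1)$-bit exchange), Alice and Bob always agree on $A$'s current internal state, hence on the next query, so adaptivity of $A$ is handled transparently. The mild subtlety is the distinction between worst-case and expected query complexity and between private- and public-coin models; these are routine, since a Markov-type argument converts an expected-$q$ query algorithm into an expected-$O(q)$-communication protocol, and public randomness only saves at most $O(\log N)$ bits over private randomness, which is absorbed into the $\Omega(N)$ bound. I do not expect any genuine obstacle; the argument is a direct specialization of the general embedding framework of~\cite{Eden2018-lower} to the two conditions stated.
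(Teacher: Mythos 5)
Your proposal is correct and is essentially the argument the paper relies on: the paper itself defers this statement to Eden and Rosenbaum but inlines the identical query-to-communication simulation (each query answered with $O(1)$ bits, lock-step simulation handling adaptivity, then invoking Theorem~\ref{thm:disjointness}) in its proof of Theorem~\ref{thm:main-lb}. Your additional care about public versus private coins and expected versus worst-case query counts fills in routine details the paper leaves implicit, but does not change the route.
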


We elaborate on what the ``allowed queries'' for our embedding are. In general one cannot allow for arbitrary queries made to arbitrary sets of agents. Indeed, this would allow queries such as ``Is the instance solvable?'' which trivially solves SR solvability with a single query. At the other extreme, one could restrict queries of particular forms such as ``Does agent $a$ prefer agent $b$ to agent $c$?'' Our lower bound holds for arbitrary Boolean queries made to any fixed subsets of agents we call \dft{batches}, so long as each batch contains at most $n/4$ agents. This allows for arbitrary Boolean queries to individual agents' preferences, but also queries whose responses depend on multiple agents within a batch, such as ``Does any agent in batch $A$ prefer agent $b$ to agent $c$?''

\section{The Lower Bound}\label{sec:lb}

In this section we give the main argument for our lower bound for SR solvability. As described above, the idea is to give an embedding of the two party set disjointness problem into SR. The embedding will be as follows. For any positive integer $n$, we will construct SR instances with $4 n$ agents. To this end, let $N = n^2$. To simplify notation in the construction, we index bits of the inputs $x, y$ with two indices from the range $[n]$, i.e., $x = (x_{ij})$ with $1 \leq i, j \leq n$.

Given any pair of Boolean vectors $x, y$, we will define an SR instance $R(x, y)$ as follows. $R(x, y)$ contains $4 n$ agents partitioned into $4$ sets, each of size $n$: $S = A \cup B \cup C \cup D$. We will write the agents in set $A$ as $A = \set{a_1, a_2, \ldots, a_n}$, and similarly with $B$, $C$, and $D$. The preferences of the agents in set $A$ are determined from $x$, and the preferences of agents of $B$ are determined from $y$. The preferences of agents in sets $C$ and $D$ are fixed independent of $x$ and $y$. The preferences are determined as follows.

\begin{figure}[p]
    \centering
    \begin{subfigure}{0.45\textwidth}
        \begin{center}
            \begin{tikzpicture}
                \tikzstyle{uniformNode} = [thick, circle, minimum size=0.75cm, inner sep=0pt, line width=1.5pt]
                \foreach \i in {1,2,3} {
                    \node[draw=rb-blue, style=uniformNode] (a\i) at (0, -\i) {$a_\i$};
                    \node[draw=rb-red, style=uniformNode] (b\i) at (4, -\i) {$b_\i$};
                    \node[draw=rb-blue-pastel, style=uniformNode] (c\i) at (0, -\i-4) {$c_\i$};
                    \node[draw=rb-violet-pastel, style=uniformNode] (d\i) at (4, -\i-4) {$d_\i$};
                }
                \foreach \i in {1,2,3} {
                    \draw[draw=rb-blue-pastel, thick, bend left=45, line width=1.5pt] (c\i) to (a\i);
                    \draw[draw=rb-violet-pastel, thick, bend right=45, line width=1.5pt] (d\i) to (b\i);
                }

                \draw[draw=rb-blue, thick, line width=1.5pt] (a1) to (b1);
                \draw[draw=rb-blue, thick, line width=1.5pt] (a1) to (b2);
                \draw[draw=rb-blue, thick, line width=1.5pt] (a3) to (b2);

                \draw[draw=rb-red, thick, line width=1.5pt] (b2) to (c2);
                \draw[draw=rb-red, thick, line width=1.5pt] (b3) to (c3);

                \draw[draw=rb-red, thick, line width=1.5pt] (b2) to (a2);
                \draw[draw=rb-red, thick, line width=1.5pt] (b3) to (a3);
            \end{tikzpicture}
        \end{center}
        \caption{Initial preferred pairs in the table.}
    \end{subfigure}
    \hfill
    \begin{subfigure}{0.45\textwidth}
        \begin{center}
            \begin{tikzpicture}
                \tikzstyle{uniformNode} = [thick, circle, minimum size=0.75cm, inner sep=0pt, line width=1.5pt]
                \foreach \i in {1,2,3} {
                    \node[draw=rb-blue, style=uniformNode] (a\i) at (0, -\i) {$a_\i$};
                    \node[draw=rb-red, style=uniformNode] (b\i) at (4, -\i) {$b_\i$};
                    \node[draw=rb-blue-pastel, style=uniformNode] (c\i) at (0, -\i-4) {$c_\i$};
                    \node[draw=rb-violet-pastel, style=uniformNode] (d\i) at (4, -\i-4) {$d_\i$};
                }
                \foreach \i in {1,2,3} {
                    \draw[draw=rb-blue-pastel, thick, bend left=45, line width=1.5pt] (c\i) to (a\i);
                    \draw[draw=rb-violet-pastel, thick, bend right=45, line width=1.5pt] (d\i) to (b\i);
                }

                \draw[draw=rb-blue, thick, line width=1.5pt] (a1) to (b1);
                \draw[draw=rb-blue, thick, line width=1.5pt] (a1) to (b2);
                \draw[draw=rb-blue, thick, line width=1.5pt] (a3) to (b2);

                \draw[draw=rb-red, thick, line width=1.5pt] (b2) to (c2);
                \draw[draw=rb-red, thick, line width=1.5pt] (b3) to (c3);

            \end{tikzpicture}
        \end{center}
        \caption{Pairs removed after proposals from $C$.}
    \end{subfigure}\\

    \vspace{1cm}
    
    \begin{subfigure}{0.45\textwidth}
        \begin{center}
            \begin{tikzpicture}
                \tikzstyle{uniformNode} = [thick, circle, minimum size=0.75cm, inner sep=0pt, line width=1.5pt]
                \foreach \i in {1,2,3} {
                    \node[draw=rb-blue, style=uniformNode] (a\i) at (0, -\i) {$a_\i$};
                    \node[draw=rb-red, style=uniformNode] (b\i) at (4, -\i) {$b_\i$};
                    \node[draw=rb-blue-pastel, style=uniformNode] (c\i) at (0, -\i-4) {$c_\i$};
                    \node[draw=rb-violet-pastel, style=uniformNode] (d\i) at (4, -\i-4) {$d_\i$};
                }
                \foreach \i in {1,2,3} {
                    \draw[draw=rb-blue-pastel, thick, bend left=45, line width=1.5pt] (c\i) to (a\i);
                    \draw[draw=rb-violet-pastel, thick, bend right=45, line width=1.5pt] (d\i) to (b\i);
                }


                \draw[draw=rb-red, thick, line width=1.5pt] (b2) to (c2);
                \draw[draw=rb-red, thick, line width=1.5pt] (b3) to (c3);

            \end{tikzpicture}
        \end{center}
        \caption{Pairs removed after proposals from $D$.}
    \end{subfigure}
    \hfill
    \begin{subfigure}{0.45\textwidth}
        \begin{center}
            \begin{tikzpicture}
                \tikzstyle{uniformNode} = [thick, circle, minimum size=0.75cm, inner sep=0pt, line width=1.5pt]
                \foreach \i in {1,2,3} {
                    \node[draw=rb-blue, style=uniformNode] (a\i) at (0, -\i) {$a_\i$};
                    \node[draw=rb-red, style=uniformNode] (b\i) at (4, -\i) {$b_\i$};
                    \node[draw=rb-blue-pastel, style=uniformNode] (c\i) at (0, -\i-4) {$c_\i$};
                    \node[draw=rb-violet-pastel, style=uniformNode] (d\i) at (4, -\i-4) {$d_\i$};
                }
                \foreach \i in {1,2,3} {
                    \draw[draw=rb-blue-pastel, thick, bend left=45, line width=1.5pt] (c\i) to (a\i);
                    \draw[draw=rb-violet-pastel, thick, bend right=45, line width=1.5pt] (d\i) to (b\i);
                }



            \end{tikzpicture}
        \end{center}
        \caption{Remaining pairs after proposals from $A$.}
    \end{subfigure}
    \caption{An illustration of the embedding of disjointness for $N = 3 \times 3$ and $n = 3$. This instance corresponds to $x_{1,1} = x_{1,2} = x_{3, 2} = 1$, while the remaining values of $x_{ij} = 0$, and $y_{2,2} = y_{3,3} = 1$ with the remaining values of $y_{ij} = 0$. Thus, $\disj(x, y) = 1.$ The edges between agents are colored by agent preferences: the dark blue edges from the $a_i$ correspond to their most preferred partners (preferred above $c_i$). Similarly, the dark red edges from the $b_j$ correspond to their most preferred edges. The light blue edges from the $c_i$ and violet edges from the $d_j$ correspond to those agent's most preferred partners. Other possible partners are not depicted. Sub-figure~(a) represents the remaining pairs in the preference table before the first rounds of proposals, while (b), (c), and (d) depict the remaining pairs after each round of proposals. The final figure depicts the unique stable matching for this instance.}
    \label{fig:disjoint}
\end{figure}
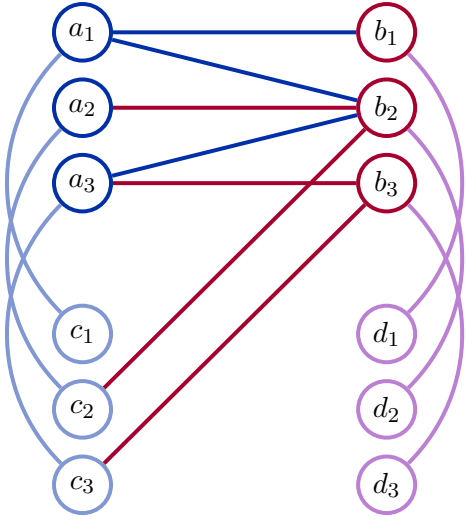
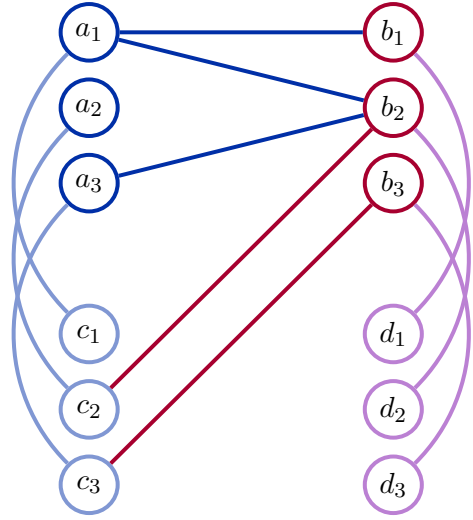
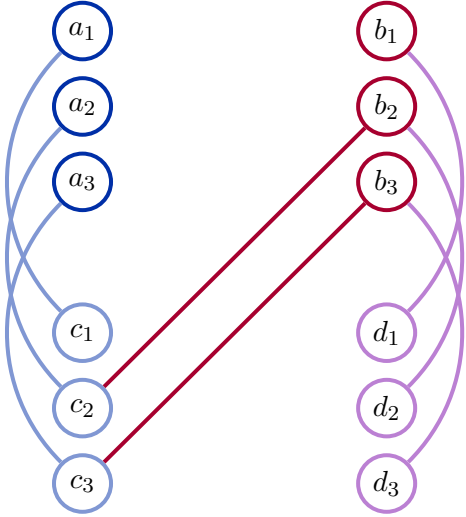
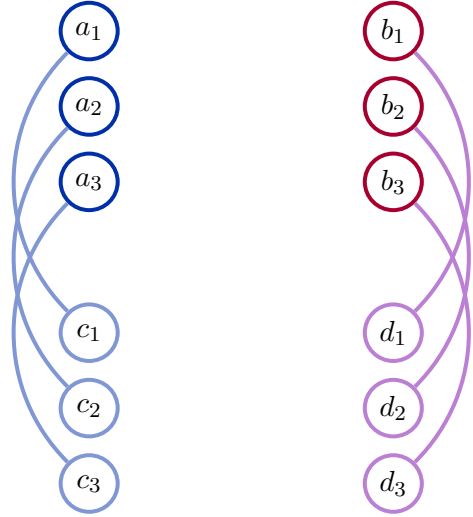

\begin{multicols}{2}
\begin{itemize}
    \item An agent $a_i \in A$ prefers in order
    \begin{enumerate}
        \item all $b_j$ with $x_{i j} = 1$ (arbitrary order)
        \item $c_i$
        \item all other agents in arbitrary order
    \end{enumerate}
    \item An agent $b_j \in B$ prefers in order
    \begin{enumerate}
        \item all $c_i$ with $y_{i j} = 1$ (arbitrary order)
        \item all $a_i$ with $y_{i j} = 1$ (arbitrary order)
        \item $d_j$
        \item all other agents in arbitrary order
    \end{enumerate}
    \columnbreak

    \item An agent $c_i \in C$ prefers in order
    \begin{enumerate}
        \item $a_i$
        \item all agents $b_j \in B$ in arbitrary order
        \item all other agents in arbitrary order
    \end{enumerate}
    \item An agent $d_j \in D$ prefers in order
    \begin{enumerate}
        \item $b_j$
        \item all other agents in arbitrary order
    \end{enumerate}
    \vfill
\end{itemize}
\end{multicols}

With the SR instances $R(x, y)$ defined as above, we state the main property regarding the solvability of $R(x, y)$.

\begin{prop}
    \label{prop:embedding}
    Suppose $x, y \in \set{0, 1}^{n \times n}$, and let $R(x, y)$ be the SR instance defined above. 
    \begin{enumerate}
        \item If $\disj(x, y) = 1$ (i.e., $x$ and $y$ are disjoint), then $R(x, y)$ admits a unique stable matching $M$, namely 
        \begin{equation}
            M = \set{\set{a_i, c_i} \sucht i \in [n]} \cup \set{\set{b_j, d_j} \sucht j \in [n]}
        \end{equation}
        \item If $x$ and $y$ are uniquely intersecting (i.e., $\sum_{i,j} x_{ij} y_{ij} = 1$), and hence $\disj(x, y) = 0$, then $R(x, y)$ is not solvable.
    \end{enumerate}
    In particular, the association $(x, y) \mapsto R(x, y)$ is an embedding of disjointness into SR solvability.
\end{prop}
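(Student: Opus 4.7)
The proposition has two parts; my plan is to verify each via a direct simulation of Phase~1 of Irving's algorithm (Algorithm~\ref{alg:phase-1}) together with the structural properties recorded in Lemma~\ref{lem:phase-1}. In both cases I will choose a convenient proposer order: first let every agent in $D$ propose, then every agent in $C$, then $A$, then $B$. Lemma~\ref{lem:phase-1}(1) makes this choice of order legitimate. The ordering is clean because $c_i$ and $d_j$ have deterministic top choices ($a_i$ and $b_j$), while $a_i$'s and $b_j$'s top tiers are determined by the bits of $x$ and $y$.

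\textbf{Part 1 (disjoint case).} After the $D$-round, each $b_j$ semiengages to $d_j$ and $b_j$'s list is truncated to $\set{c_i \sucht y_{ij}=1} \cup \set{a_i \sucht y_{ij}=1} \cup \set{d_j}$. Correspondingly, $b_j$ is deleted from every $a_i$'s list whenever $y_{ij}=0$. In particular, any $b_j$ that $a_i$ ranks above $c_i$ (i.e.\ with $x_{ij}=1$) is deleted, since disjointness forces $y_{ij}=0$ for such pairs. Hence after the $D$-round the first remaining entry of $a_i$'s list is $c_i$. The $C$-round then truncates each $a_i$'s list to $\set{c_i}$, and the subsequent $A$- and $B$-rounds leave every agent with a singleton list equal to its $M$-partner. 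Applying Lemma~\ref{lem:phase-1}(3) to the resulting table shows that $M$ is the unique stable matching.

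\textbf{Part 2 (uniquely intersecting case).} Let $(i^*, j^*)$ be the unique pair with $x_{i^*j^*} = y_{i^*j^*} = 1$. Repeat the same Phase~1 execution. The Part~1 argument applies verbatim to every $(i, j) \neq (i^*, j^*)$, because there $x_{ij} y_{ij} = 0$ holds locally. The single exception is $a_{i^*}$, whose list after the $C$-round is $\set{b_{j^*}, c_{i^*}}$: $b_{j^*}$ survives the $D$-round cut because $y_{i^*j^*}=1$, and it sits above $c_{i^*}$ because $x_{i^*j^*}=1$. So in the $A$-round $a_{i^*}$ proposes to $b_{j^*}$ and frees $d_{j^*}$. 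At this point the surviving pairs among $\set{a_{i^*}, b_{j^*}, c_{i^*}, d_{j^*}}$ form a copy of the 4-agent preferences of Example~\ref{eg:unsolvable}, while every other agent has been paired off in the Part~1 pattern and is sealed out of the table. Continuing Phase~1 on this 4-agent residue drives some preference list to empty, and Lemma~\ref{lem:phase-1}(2) then yields that $R(x, y)$ is not solvable.

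\textbf{Main obstacle.} The main care is in Part~2, where one must rigorously justify that after the first three rounds the surviving edges among $\set{a_{i^*}, b_{j^*}, c_{i^*}, d_{j^*}}$ really do reproduce Example~\ref{eg:unsolvable} and that the remaining $4n - 4$ agents cannot rescue solvability by offering alternative partners. The key observation is that on the complement of $\set{a_{i^*}, b_{j^*}, c_{i^*}, d_{j^*}}$ the preference structure is the ``clean'' disjoint configuration from Part~1, so the only way to avoid emptying a list is to find a stable sub-matching of the 4-agent residue, which Example~\ref{eg:unsolvable} rules out. Once that reduction is made precise, non-solvability of the 4-agent instance transfers directly to non-solvability of $R(x, y)$.
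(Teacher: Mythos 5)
Your Part~1 is correct and is essentially the paper's own argument with the $C$- and $D$-rounds interchanged, which Lemma~\ref{lem:phase-1}(1) permits. The trouble is Part~2, and the step you flag as the ``main obstacle'' is not a routine tidying-up step: it is exactly where the argument breaks. After the $D$-, $C$- and $A$-rounds the four agents $a_{i^*}, b_{j^*}, c_{i^*}, d_{j^*}$ are \emph{not} sealed off from the rest of the table. In particular, every pair $\set{c_{i^*}, b_j}$ with $y_{i^*j} = 1$ survives: such a $b_j$ ranks $c_{i^*}$ in its top tier, above $d_j$, so the $D$-round does not delete the pair, and $c_{i^*}$ receives no proposal in the $A$-round because $a_{i^*}$ proposes to $b_{j^*}$ instead of $c_{i^*}$. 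Likewise every pair $\set{d_j, d_{j'}}$ survives. So the residue is not a four-agent instance, and non-solvability of Example~\ref{eg:unsolvable} cannot simply be transferred; the claim that ``the only way to avoid emptying a list is to find a stable sub-matching of the 4-agent residue'' is unsupported.

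Worse, the outside agents genuinely can rescue solvability, so this gap cannot be closed without changing the construction. Take $n = 2$ with $x_{11} = y_{11} = y_{12} = 1$ and all other bits $0$; this input is uniquely intersecting. Then $b_2$'s list begins $c_1, a_1, d_2, \ldots$, and if the ``arbitrary order'' on $c_1$'s list places $b_2$ before $b_1$, the matching $\set{a_1, b_1}, \set{b_2, c_1}, \set{a_2, c_2}, \set{d_1, d_2}$ is stable: $a_1$, $a_2$, $b_2$, $c_2$ all receive their first choices; $b_1$ prefers only $c_1$ to its partner, but $c_1$ prefers its partner $b_2$ to $b_1$; $c_1$ prefers only $a_1$, who already has his first choice; and $d_1, d_2$ prefer only $b_1, b_2$, who are matched strictly above them. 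So statement~2 of the proposition fails for this tie-breaking, and a symmetric example (swapping the roles of $b_1$ and $b_2$) defeats the opposite tie-breaking. For what it is worth, the paper's own proof stumbles on the same point: in its fourth round it asserts that every $b_j$ with $j \neq \ell$ proposes to $d_j$, which fails precisely when $y_{kj} = 1$ for some $j \neq \ell$, and then the pairs $\set{d_j, d_\ell}$ are never removed. Your instinct that this was the delicate step was sound, but the step is not merely unproved --- as stated, it is false.
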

\begin{proof}
    To prove the proposition, we consider an execution of the Phase~1 algorithm, Algorithm~\ref{alg:phase-1}. We first consider the case where $\disj(x, y) = 1$ (case~1 of the proposition). Consider an execution of Algorithm~\ref{alg:phase-1} with the agents proposing in the following order. By Lemma~\ref{lem:phase-1}, the resulting preference table does not depend on the order of the proposals. We illustrate an example of the execution of such an instance with $n = 3$ in Figure~\ref{fig:disjoint}.
    \begin{enumerate}
        \item Agents in $C$ propose. Specifically, each $c_i \in C$ proposes to $a_i$. Since $a_i$ receives a proposal from $c_i$, in Line~\ref{ln:remove}, only $b_j$s with $x_{ij} = 1$ will remain on $a$'s preference list (above $c_i$). Since $\disj(x, y) = 1$, $x_{ij} = 1$ implies $y_{ij} = 0$, hence all pairs $\set{a_i, b_j}$ with $y_{ij} = 1$ are removed.
        \item Agents in $D$ propose. Specifically, each $d_j \in D$ proposes to $b_j$. When $b_j$ receives the proposal from $d_j$, only $c_i$s with $y_{ij} = 1$ remain on $b_j$'s preference list, as all $a_i$'s with $y_{ij} = 1$ were removed after the proposals in Step~1. Since $\disj(x, y) = 1$, if $y_{ij} = 1$, then $x_{ij} = 0$. Therefore, all remaining pairs of the form $\set{a_i, b_j}$ are removed from the preference table.
        \item Agents in $A$ propose. At this point, each $a_i$'s preference list consists of a single entry, $c_i$, so $a_i$ proposes to $c_i$. Since $a_i$ is first on $c_i$'s preference list, all remaining pairs involving $c_i$ are removed from the preference table. In particular, all pairs of the form $\set{c_i, b_j}$ are removed.
        \item Agents in $B$ propose. At this point, each $b_j$'s preference list consists of a single entry, $d_j$, so $b_j$ proposes to $d_j$.
    \end{enumerate}
    After agents from $B$ propose, all agents are semiengaged so the algorithm terminates. The preference table consists only of the matching $M$ in the proposition's statement, which is therefore a stable matching (by Lemma~\ref{lem:phase-1}).
    
    Now consider the case where $x$ and $y$ are uniquely intersecting. Let $k, \ell \in [n]$ be the indices for which $x_{k \ell} = y_{k \ell} = 1$, while $x_{i j} y_{i j} = 0$ for all $(i, j) \neq (k, \ell)$. As above, we consider an execution of Algorithm~\ref{alg:phase-1}. We illustrate an example of the execution of such an instance in Figure~\ref{fig:intersecting}.
    \begin{enumerate}
        \item Agents in $C$ propose. In this case, pairs $\set{a_i, b_j}$ with $y_{ij} = 1$ are removed from the preference table except for $\set{a_k, b_\ell}$. Further, all pairs $\set{a_i, d_j}$ are removed.
        \item Agents in $D$ propose. Now all pairs $\set{a_i, b_j}$ with $x_{ij} = 1$ are removed from the preference table except for $\set{a_k, b_\ell}$.
        \item Agents in $A$ propose. All $a_{i}$ with $i \neq k$ propose to $c_i$, while $a_k$ proposes to $b_\ell$. Subsequently, $b_\ell$ rejects $d_\ell$. As each agent $c_i$ with $i \neq k$ receives a proposal from $a_i$, each $\set{c_i, d_\ell}$ is removed from the preference table.
        \item Agents in $B$ propose. In this case, each $b_j$ with $j \neq \ell$ proposes to $d_j$, while $b_\ell$ proposes to $c_k$. Subsequently, all pairs $\set{d_j, d_\ell}$ are removed from the preference table, as is the pair $\set{c_k, d_\ell}$.
    \end{enumerate}
    At this point all pairs involving $d_\ell$ have been removed, so $d_\ell$'s preference list is empty. Therefore, by Lemma~\ref{lem:phase-1}, $R(x, y)$ does not admit a stable matching.
\end{proof}

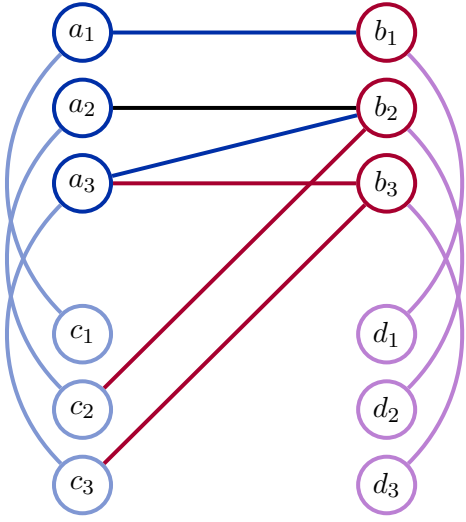
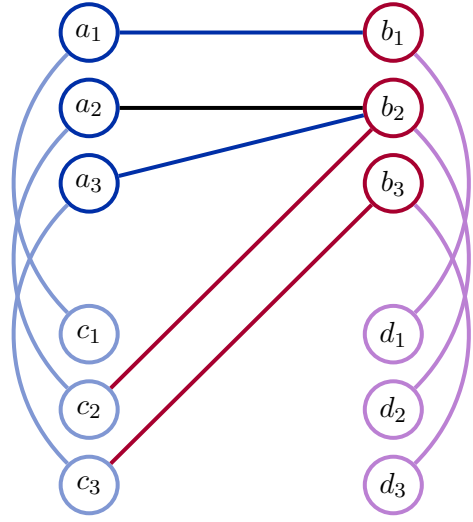
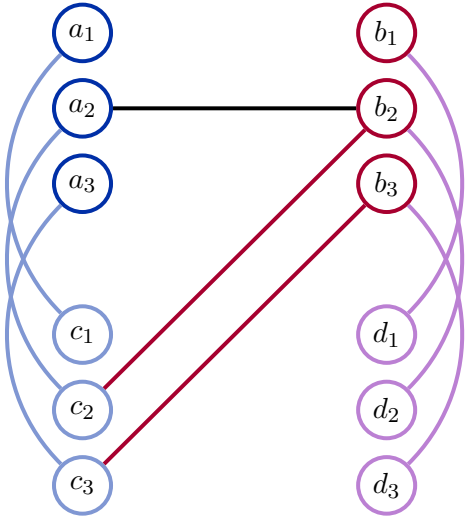
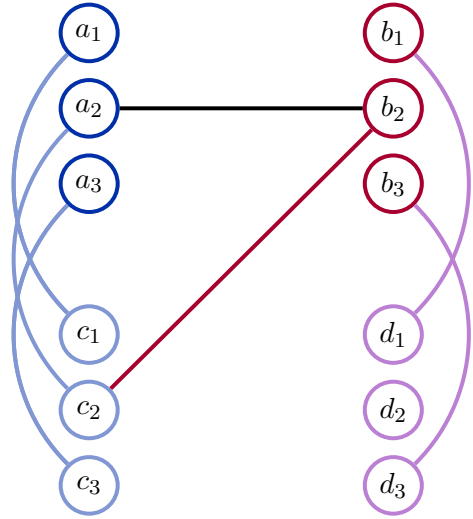
\begin{figure}[p]
    \centering
    \begin{subfigure}{0.45\textwidth}
        \begin{center}
            \begin{tikzpicture}
                \tikzstyle{uniformNode} = [thick, circle, minimum size=0.75cm, inner sep=0pt, line width=1.5pt]
                \foreach \i in {1,2,3} {
                    \node[draw=rb-blue, style=uniformNode] (a\i) at (0, -\i) {$a_\i$};
                    \node[draw=rb-red, style=uniformNode] (b\i) at (4, -\i) {$b_\i$};
                    \node[draw=rb-blue-pastel, style=uniformNode] (c\i) at (0, -\i-4) {$c_\i$};
                    \node[draw=rb-violet-pastel, style=uniformNode] (d\i) at (4, -\i-4) {$d_\i$};
                }
                \foreach \i in {1,2,3} {
                    \draw[draw=rb-blue-pastel, thick, bend left=45, line width=1.5pt] (c\i) to (a\i);
                    \draw[draw=rb-violet-pastel, thick, bend right=45, line width=1.5pt] (d\i) to (b\i);
                }

                \draw[draw=rb-blue, thick, line width=1.5pt] (a1) to (b1);
                \draw[draw=black, thick, line width=1.5pt] (a2) to (b2);
                \draw[draw=rb-blue, thick, line width=1.5pt] (a3) to (b2);

                \draw[draw=rb-red, thick, line width=1.5pt] (b2) to (c2);
                \draw[draw=rb-red, thick, line width=1.5pt] (b3) to (c3);

                \draw[draw=rb-red, thick, line width=1.5pt] (b3) to (a3);
            \end{tikzpicture}
        \end{center}
        \caption{Initial preferred pairs in the table.}
    \end{subfigure}
    \hfill
    \begin{subfigure}{0.45\textwidth}
        \begin{center}
            \begin{tikzpicture}
                \tikzstyle{uniformNode} = [thick, circle, minimum size=0.75cm, inner sep=0pt, line width=1.5pt]
                \foreach \i in {1,2,3} {
                    \node[draw=rb-blue, style=uniformNode] (a\i) at (0, -\i) {$a_\i$};
                    \node[draw=rb-red, style=uniformNode] (b\i) at (4, -\i) {$b_\i$};
                    \node[draw=rb-blue-pastel, style=uniformNode] (c\i) at (0, -\i-4) {$c_\i$};
                    \node[draw=rb-violet-pastel, style=uniformNode] (d\i) at (4, -\i-4) {$d_\i$};
                }
                \foreach \i in {1,2,3} {
                    \draw[draw=rb-blue-pastel, thick, bend left=45, line width=1.5pt] (c\i) to (a\i);
                    \draw[draw=rb-violet-pastel, thick, bend right=45, line width=1.5pt] (d\i) to (b\i);
                }

                \draw[draw=rb-blue, thick, line width=1.5pt] (a1) to (b1);
                \draw[draw=black, thick, line width=1.5pt] (a2) to (b2);
                \draw[draw=rb-blue, thick, line width=1.5pt] (a3) to (b2);

                \draw[draw=rb-red, thick, line width=1.5pt] (b2) to (c2);
                \draw[draw=rb-red, thick, line width=1.5pt] (b3) to (c3);

            \end{tikzpicture}
        \end{center}
        \caption{Pairs removed after proposals from $C$.}
    \end{subfigure}\\

    \vspace{1cm}
    
    \begin{subfigure}{0.45\textwidth}
        \begin{center}
            \begin{tikzpicture}
                \tikzstyle{uniformNode} = [thick, circle, minimum size=0.75cm, inner sep=0pt, line width=1.5pt]
                \foreach \i in {1,2,3} {
                    \node[draw=rb-blue, style=uniformNode] (a\i) at (0, -\i) {$a_\i$};
                    \node[draw=rb-red, style=uniformNode] (b\i) at (4, -\i) {$b_\i$};
                    \node[draw=rb-blue-pastel, style=uniformNode] (c\i) at (0, -\i-4) {$c_\i$};
                    \node[draw=rb-violet-pastel, style=uniformNode] (d\i) at (4, -\i-4) {$d_\i$};
                }
                \foreach \i in {1,2,3} {
                    \draw[draw=rb-blue-pastel, thick, bend left=45, line width=1.5pt] (c\i) to (a\i);
                    \draw[draw=rb-violet-pastel, thick, bend right=45, line width=1.5pt] (d\i) to (b\i);
                }

                \draw[draw=black, thick, line width=1.5pt] (a2) to (b2);

                \draw[draw=rb-red, thick, line width=1.5pt] (b2) to (c2);
                \draw[draw=rb-red, thick, line width=1.5pt] (b3) to (c3);

            \end{tikzpicture}
        \end{center}
        \caption{Pairs removed after proposals from $D$.}
    \end{subfigure}
    \hfill
    \begin{subfigure}{0.45\textwidth}
        \begin{center}
            \begin{tikzpicture}
                \tikzstyle{uniformNode} = [thick, circle, minimum size=0.75cm, inner sep=0pt, line width=1.5pt]
                \foreach \i in {1,2,3} {
                    \node[draw=rb-blue, style=uniformNode] (a\i) at (0, -\i) {$a_\i$};
                    \node[draw=rb-red, style=uniformNode] (b\i) at (4, -\i) {$b_\i$};
                    \node[draw=rb-blue-pastel, style=uniformNode] (c\i) at (0, -\i-4) {$c_\i$};
                    \node[draw=rb-violet-pastel, style=uniformNode] (d\i) at (4, -\i-4) {$d_\i$};
                }
                \foreach \i in {1,2,3} {
                    \draw[draw=rb-blue-pastel, thick, bend left=45, line width=1.5pt] (c\i) to (a\i);
                }                
                \foreach \i in {1,3} {
                    \draw[draw=rb-blue-pastel, thick, bend left=45, line width=1.5pt] (c\i) to (a\i);
                    \draw[draw=rb-violet-pastel, thick, bend right=45, line width=1.5pt] (d\i) to (b\i);
                }

                \draw[draw=black, thick, line width=1.5pt] (a2) to (b2);

                \draw[draw=rb-red, thick, line width=1.5pt] (b2) to (c2);

            \end{tikzpicture}
        \end{center}
        \caption{Remaining pairs after proposals from $A$.}
    \end{subfigure}
    \caption{An illustration of the embedding of disjointness for $N = 3 \times 3$ and $n = 6$. This instance corresponds to $x_{1,1} = x_{2,2} = x_{3, 2} = 1$, while the remaining values of $x_{ij} = 0$, and $y_{2,2} = y_{3,3} = 1$ with the remaining values of $y_{ij} = 0$. Thus, $\disj(x, y) = 1$ with $x_{2,2} = y_{2,2} = 1$. Sub-figure~(a) represents the remaining pairs in the preference table before the first rounds of proposals, while (b), (c), and (d) depict the remaining pairs after each round of proposals. Note that in all of the figures, the pair $\set{a_2, b_2}$ is preferred by $a_2$ to $c_2$ and preferred by $b_2$ to $d_2$. Therefore, this edge is not removed after either the $C$ proposals nor the $D$ proposals in figures~(b) and~(c). Subsequently, when agents in $A$ propose, $a_2$ proposes to $b_2$, after which $b_2$ rejects $d_2$. At this point $d_2$'s preference list is empty, hence the instance does not admit a stable matching.}
    \label{fig:intersecting}
\end{figure}

We now state and prove our main lower bound for the stable roommates problem. In the formal statement, we allow for any procedure that performs arbitrary adaptive Boolean queries to the agents' preference lists, and queries can even be made to ``batches'' of agents---i.e., a fixed partition of the agents---so long as no batch contains more than $n/2$ agents. For example, this allows queries of the form, ``Does any agent in set $A$ prefer agent $b$ to $b'$?'' Specifically, our argument holds for any Boolean query that does not involve agents from \emph{both} sets $A$ and $B$ in the construction above. An upper bound on the batch size is necessary allowing a batch size of $n$ would allow for the query ``Does $S$ admit a stable matching?'' as a single query.

\begin{thm}\label{thm:main-lb}
    Any randomized (or deterministic) mechanism that decides SR solvability on instances with $n$ agents using adaptive Boolean query access to the agents' preferences requires $\Omega(n^2)$ Boolean queries in expectation. This bound applies even if queries can be made to fixed batches of agents of size up to $n/2$.
\end{thm}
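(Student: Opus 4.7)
The strategy is to combine Proposition~\ref{prop:embedding} with the communication lower bound of Theorem~\ref{thm:disjointness} via the embedding framework of Theorem~\ref{thm:communication-embedding}. The map $(x,y) \mapsto R(x,y)$ already sends disjointness inputs of size $N = n^2$ into SR solvability instances with $n_{\mathrm{sr}} = 4n$ agents, so what remains is to show that every allowed Boolean query can be simulated by $O(1)$ bits of communication once we choose a labeling of the $4n$ agents into $A, B, C, D$ that is compatible with the mechanism's batch partition.

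Fix any mechanism $M$ that uses at most $q$ adaptive queries in expectation on SR instances with $4n$ agents and queries a predetermined partition $\pi$ whose batches each have size at most $n_{\mathrm{sr}}/2 = 2n$. In the reduction, Alice holds $x$, Bob holds $y$, and both know $\pi$ together with any public randomness of $M$. Because the preferences of agents in $A$ depend only on $x$, those of $B$ only on $y$, and those of $C \cup D$ are fixed by the construction, any Boolean query to a batch disjoint from $B$ can be answered locally by Alice, and any query to a batch disjoint from $A$ can be answered locally by Bob; in either case the answerer transmits a single bit to the other party.

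The crux of the argument is therefore to exhibit a labeling for which no batch of $\pi$ intersects both $A$ and $B$. I would obtain one by first bipartitioning the batches of $\pi$ into two sub-collections $\pi_A, \pi_B$, each of total agent-count at least $n$, and then placing the $n$ agents of $A$ inside the agents of $\pi_A$, the $n$ agents of $B$ inside $\pi_B$, and distributing $C \cup D$ arbitrarily across the remaining $2n$ slots. Such a bipartition always exists: greedily add batches to $\pi_A$ until its total size first reaches $n$; since each batch has size at most $2n$, this yields $n \le |\pi_A| < 3n$, whence $|\pi_B| = 4n - |\pi_A| > n$.

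Assembling the pieces, any $q$-query mechanism for SR solvability on instances with $4n$ agents is converted into a two-party protocol for $\disj$ on inputs of size $N = n^2$ using $O(q)$ bits of expected communication, so Theorem~\ref{thm:disjointness} forces $q = \Omega(N) = \Omega(n^2) = \Omega(n_{\mathrm{sr}}^2)$. The main delicacy, in my view, is the batch-separation step in the third paragraph: the bound $n_{\mathrm{sr}}/2$ on batch size is precisely what allows the greedy bipartition to succeed, and a batch permitted to contain more than half the agents would permit a single query that decides solvability outright, as already noted in the discussion preceding the theorem.
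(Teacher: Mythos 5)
Your proof is correct and follows essentially the same route as the paper: simulate the query mechanism as a two-party protocol for $\disj$ on inputs of size $N = n^2$ via the embedding $(x,y) \mapsto R(x,y)$, with Alice answering (in one bit) any query touching only $A \cup C \cup D$ and Bob any query touching only $B \cup C \cup D$, then invoke Theorem~\ref{thm:disjointness}. Your third paragraph---the greedy bipartition of the batches so that the labeling of agents into $A$ and $B$ can be chosen to respect the fixed partition---is a detail the paper leaves implicit (it merely asserts that no allowed query involves both $A$ and $B$), and making it explicit is a welcome sharpening rather than a different approach.
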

\begin{proof}
    Let $\calA$ be any algorithm that determines SR solvability using $q$ queries for SR instances of size $n$. Then we can use $\calA$ to define a two party communication protocol for $\disj$ using the embedding of Proposition~\ref{prop:embedding} as follows. Suppose Alice and Bob hold inputs $x, y \in \set{0,1}^{n \times n}$, respectively, which are promised to be either disjoint or uniquely intersecting. Note that the set disjointness instance has size $N = n^2$. Alice forms preferences of agents in the set $A$ as in the embedding, and Bob does the same for $B$. Thus, the instance has $4n$ agents in total. Note that the preferences of agents in $C$ and $D$ are independent of $x$ and $y$, hence they are known to both Alice and Bob. Alice and Bob then simulate the algorithm $\calA$: the response to any query that $\calA$ makes to agents in $A \cup C \cup D$ can be computed by Alice, who then sends the Boolean result to Bob as a single bit. Similarly, the response to any query made to $B \cup C \cup D$ can be computed by Bob, who then sends the response to Alice. When $\calA$ terminates after $q$ queries, both Alice and Bob can compute the output of $\calA$ from the communication transcript, hence they both know whether or not the $R(x, y)$ is solvable. By Proposition~\ref{prop:embedding}, this output determines $\disj(x, y)$.

    Since this communication protocol solves an arbitrary instance of $\disj$ (with the unique intersection promise) with input size $n^2$ using $q$ bits of communication, we have $q = \Omega(n^2)$ by Theorem~\ref{thm:disjointness}.
\end{proof}

Finally, we argue that the query lower bound of Theorem~\ref{thm:main-lb} implies the computational lower bounds listed in Corollary~\ref{cor:lb}.

\begin{proof}[Proof of Corollary~\ref{cor:lb} (Sketch)]
    The lower bound for Turing machines follows from the observation that each ``read'' operation performed by a Turing machine can be modelled as a response to a single Boolean query (the value of the bit that is read). Thus, the query lower bound implies that any Turing machine the decides SR solvability must read $\Omega(n^2)$ bits of its input, hence its running time is $\Omega(n^2)$. 
    
    Similarly, for random access machines (RAMs), each memory access to a word of size $O(\log n)$ bits can be simulated by $O(\log n)$ Boolean queries. Hence we obtain a $\Omega(n^2/\log n)$ lower bound on the number of memory accesses.

    We observe that the arguments above make no reference to the representation of the input to the problem, and the query lower bound argument holds for any Boolean queries made to the input so long as a single query's response does not depend on the preferences of agents from both $A$ and $B$. Following an arbitrary preprocessing of the agents' preferences where agents in $A$ are processed independently from agents in $B$, each bit of the resulting encoding is determined by either preferences in $A \cup C \cup D$ or preferences in $B \cup C \cup D$. In the former case, reading a single bit of the processed input can be simulated with a single Boolean query to $A$ (as preferences in $C$ and $D$ are fixed) in the former case and $B$ in the latter case. Thus, the lower bounds apply to preprocessed inputs as well.
\end{proof}




\paragraph{Acknowledgements} I am thankful to Christine Cheng for reigniting my interest in the stable roommates problem and to David Manlove for his suggestions to improve this manuscript.

\bibliographystyle{plain}
\bibliography{references}

\end{document}